\theoremstyle{plain}
\newtheorem{theorem}{Theorem}
\newtheorem{lemma}[theorem]{Lemma}
\newtheorem{proposition}[theorem]{Proposition}
\newcommand{\emptystr}{\varepsilon}
\newcommand{\lcp}{\mathit{lcp}} % longest common prefix
\newcommand{\gtext}{\mathsf{T}}
\newcommand{\sSigma}{\Sigma_{s}}
\newcommand{\pSigma}{\Sigma_{p}}
\newcommand{\ssigma}{\sigma_{s}}
\newcommand{\psigma}{\sigma_{p}}
\newcommand{\penc}[1]{\langle#1\rangle}
\newcommand{\zeros}[1]{|#1|_{0}}
\newcommand{\multiset}[1]{\{\!\!\{#1\}\!\!\}}
\newcommand{\GroupA}{\mathit{G}}
\newcommand{\ZeroA}{\mathit{Z}}
\newcommand{\EndA}{\mathit{E}}
\newcommand{\psa}{\tau}
\newcommand{\LF}{\mathit{LF}}
\newcommand{\Lstr}{\mathsf{L}}
\newcommand{\Pref}{\mathit{Pref}}
\newcommand{\interval}[1]{\mathit{I}(#1)}
\newcommand{\idtt}[1]{\ensuremath{\mathtt{#1}}}
\title{Inverting Parameterized Burrows-Wheeler Transform}
\author[1]{Shogen Kawanami\thanks{kawanami.syogen518@mail.kyutech.jp}}
\author[1]{Kento Iseri\thanks{iseri.kento210@mail.kyutech.jp}}
\author[1]{Tomohiro I\thanks{tomohiro@ai.kyutech.ac.jp}}
\affil[1]{Kyushu Institute of Technology, 680-4 Kawazu, Iizuka, Fukuoka 820-8502, Japan}
\date{}
\begin{document}
\maketitle

\begin{abstract}
The Burrows-Wheeler Transform (BWT) of a string is an invertible permutation of the string,
which can be used for data compression and compact indexes for string pattern matching.
Ganguly et al. [SODA, 2017] introduced the parameterized BWT (pBWT) to design
compact indexes for parameterized matching (p-matching), 
a variant of string pattern matching with parameter symbols introduced by Baker [STOC, 1993].
Although the pBWT was inspired by the BWT, it is not obvious whether the pBWT itself is invertible or not.
In this paper we show that we can retrieve the original string (up to renaming of parameter symbols)
from the pBWT of length $n$ in $O(n^2)$ time and $O(n)$ space.
\end{abstract}

\section{Introduction}\label{sec:intro}
A \emph{parameterized string} (\emph{p-string}) consists of two different types of symbols, static symbols (s-symbols) and parameter symbols (p-symbols).
Let $\sSigma$ and $\pSigma$ denote the sets of s-symbols and p-symbols, respectively.
Two p-strings $x$ and $y$ are said to \emph{parameterized match} (\emph{p-match}) if and only if
$x$ can be transformed into $y$ by applying a bijection on $\pSigma$ to every occurrence of p-symbols in $x$.
For example with $\sSigma = \{ \idtt{A}, \idtt{B} \}$ and $\pSigma = \{ \idtt{x}, \idtt{y}, \idtt{z} \}$, 
two p-strings $\idtt{xAyxzzyB}$ and $\idtt{zAxzyyxB}$ p-match because
$\idtt{xAyxzzyB}$ can be transformed into $\idtt{zAxzyyxB}$ 
by replacing $\idtt{x}, \idtt{y}$ and $\idtt{z}$ with $\idtt{z}$, $\idtt{x}$ and $\idtt{y}$, respectively.
P-matching was introduced by Baker to address a variant of string pattern matching with renamable symbols in software maintenance and plagiarism detection~\cite{1993Baker_TheorOfParamPatterMatch_STOC,1996Baker_ParamPatterMatchAlgorAnd,1997Baker_ParamDuplicInStrinAlgor}.
Later it finds wider applications such as detecting similar images and isomorphic graphs, 
and algorithms for p-matching have been extensively studied 
(see a recent survey~\cite{2020MendivelsoTP_BriefHistorOfParamMatch} and references therein).

Almost all efficient p-matching algorithms are based on the \emph{prev-encoding}~\cite{1993Baker_TheorOfParamPatterMatch_STOC} 
that replaces every occurrence of a p-symbol in a p-string 
with the distance to the previous occurrence of the same p-symbol, or with ``0'' if there is no previous occurrence.
By definition, two p-strings p-match if and only if their prev-encoded strings are identical.
For example, the p-matching p-strings $\idtt{xAyxzzyB}$ and $\idtt{zAxzyyxB}$ are both prev-encoded to $\idtt{0A03014B}$.
Also, the prefixes of the prev-encoding do not change when symbols are appended.
These properties enable us to extend suffix trees and suffix arrays for indexes of p-matching:
For a p-string $\gtext$ of length $n$, the parameterized suffix tree (pST)~\cite{1993Baker_TheorOfParamPatterMatch_STOC} is 
the compacted trie of the prev-encoded suffixes of $\gtext$,
and the parameterized suffix array (pSA)~\cite{2008DeguchiHBIT_ParamSuffixArrayForBinar_PSC}, denoted by $\psa[1..n]$ in this paper, 
stores the starting positions of prev-encoded suffixes in their lexicographic order.
On the other hand, the suffixes of the prev-encoding may change when symbols are prepended,
which makes some algorithms for p-matching more complicated than those for exact string pattern matching.

The Burrows-Wheeler Transform (BWT)~\cite{Burrows1994BWT} of a string is an invertible permutation of the string.
The BWT was originally introduced for data compression but has become the basis of compact or compressed indexes for string pattern matching
such as the FM-index~\cite{Ferragina2000ODS} and the r-index~\cite{2018GagieNP_OptimTimeTextIndexIn_SODA}.
A notable property of the BWT is that it enables us to implement the \emph{LF-mapping}
that maps the lexicographic rank of a suffix of the original string to the rank of the suffix longer by one.

Inspired by the FM-index, Ganguly et al.~\cite{2017GangulyST_PbwtAchievSuccinDataStruc_SODA} introduced the parameterized BWT (pBWT) 
to design a compact index for p-matching.
For a p-string $\gtext$ of length $n$ over $(\sSigma \cup \pSigma)$, 
the pBWT $\Lstr$ of $\gtext$ is a string of length $n$ over an alphabet of size $\sigma = |\sSigma \cup \pSigma|$, 
and thus, within compact space of $O(n \log \sigma)$ bits.
For any position $i$ with $\psa[i] \neq 1$,
$\Lstr[i]$ holds the necessary and sufficient information 
to turn the prev-encoding of $\gtext[\psa[i]..n]$ into that of $\gtext[\psa[i]-1..n]$ (see Section~\ref{sec:prelim} for its formal definition).
So it is not difficult to see that the prev-encoding of $\gtext$ is retrieved from $\Lstr$ if the LF-mapping is provided.
However it is not obvious whether we can retrieve the information of the LF-mapping from $\Lstr$ alone
because existing pBWT-based indexes~\cite{2017GangulyST_PbwtAchievSuccinDataStruc_SODA,2021KimC_SimplFmIndexForParam,2024IseriIHKYS_BreakBarrierInConstCompac_ICALP} 
use other data structures together with $\Lstr$ to implement the LF-mapping.

In this paper, we show that we can retrieve the information of the LF-mapping and $\gtext$ (up to renaming of p-symbols)
from $\Lstr$ in $O(n^2)$ time and $O(n)$ space.
To the best of our knowledge, the invertibility of pBWTs has not been explored before
and our result paves the way to using pBWTs as a compression method for p-strings.

\subsection{Related work}
The problem we tackled in this work can also be seen as a reverse engineering problem for string data structures,
which have been actively studied to gain a deeper understanding of data structures (e.g.,~\cite{2003BannaiIST_InferStrinFromGraphAnd_MFCS,2013KucherovTV_CombinOfSuffixArray,2014IIBT_InferStrinFromSuffixTrees,2014CazauxR_ReverEnginOfCompacSuffix,Starikovskaya2015sto,2023KaerkkaeinenPP_StrinInferFromLongesCommon}).
For those related to p-matching, there are studies on parameterized border arrays~\cite{2011IIBT_VerifAndEnumerParamBorder}
and parameterized suffix and LCP arrays~\cite{2024AmirKLMS_ReconParamStrinFromParam,2024AmirKMS_LinearTimeReconOfParam_SPIRE}.
Since pBWTs sometimes act differently from BWTs,
a deeper understanding of its properties would be a key to further development in pBWT-based data structures like~\cite{2022GangulyST_FullyFunctParamSuffixTrees_ICALP}.

\section{Preliminaries}\label{sec:prelim}

\subsection{Notations and tools}
An integer interval $\{ i, i+1, \dots, j\}$ is denoted by $[i..j]$, 
where $[i..j]$ represents the empty interval if $i > j$.
Also $[i..j)$ denotes $[i..j-1]$.

Let $\Sigma$ be an ordered finite \emph{alphabet}.
An element of $\Sigma^*$ is called a \emph{string} over $\Sigma$.
The length of a string $w$ is denoted by $|w|$. 
The empty string $\emptystr$ is the string of length 0,
that is, $|\emptystr| = 0$.
Let $\Sigma^+ = \Sigma^* - \{\emptystr\}$ and $\Sigma^k = \{ w \in \Sigma^* \mid |w| = k \}$ for any non-negative integer $k$.
The concatenation of two strings $x$ and $y$ is denoted by $x \cdot y$ or simply $xy$.
When a string $w$ is represented by the concatenation of strings $x$, $y$ and $z$ (i.e., $w = xyz$), 
then $x$, $y$ and $z$ are called a \emph{prefix}, \emph{substring}, and \emph{suffix} of $w$, respectively.
A substring $x$ of $w$ is called \emph{proper} if $x \neq w$.

The $i$-th symbol of a string $w$ is denoted by $w[i]$ for $1 \leq i \leq |w|$,
and the substring of a string $w$ that begins at position $i$ and
ends at position $j$ is denoted by $w[i..j]$ for $1 \leq i \leq j \leq |w|$,
i.e., $w[i..j] = w[i]w[i+1] \cdots w[j]$.
For convenience, let $w[i..j] = \emptystr$ if $j < i$.
For two strings $x$ and $y$, let $\lcp(x, y)$ denote the length of the longest common prefix between $x$ and $y$.
We consider the lexicographic order over $\Sigma^*$ by extending the strict total order $<$ defined on $\Sigma$:
$x$ is lexicographically smaller than $y$ (denoted as $x < y$) if and only if 
either $x$ is a proper prefix of $y$ or $x[\lcp(x, y)+1] < y[\lcp(x, y)+1]$ holds.

We will use the following result for dynamic strings as a tool:
\begin{lemma}[\cite{2015MunroN_ComprDataStrucForDynam_ESA}]\label{lem:drs}
  For a positive integer $U$, a string over an alphabet $[1..U]$ can be dynamically maintained
  while supporting insertion/deletion of a symbol to/from any position of the string as well as random access 
  in $(m + o(m)) \lg U$ bits of space and $O(\frac{\lg m}{\lg \lg m})$ query and update times, 
  where $m$ is the length of the current string.
\end{lemma}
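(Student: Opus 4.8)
The plan is to keep the current string in a balanced search tree ordered by position, so that a single root-to-leaf walk both locates a requested position and exposes its symbol. Two requirements pull against each other: to meet the $O(\frac{\lg m}{\lg\lg m})$ time budget the tree must be shallow, whereas to meet the $(m+o(m))\lg U$ space budget its auxiliary data must be negligible beside the $m\lg U$ bits of raw symbols. First I would reconcile them by fixing a fanout $f=\Theta(\lg^{\epsilon} m)$ for a small constant $\epsilon\in(0,1)$ and maintaining a weight-balanced B-tree whose leaves, read left to right, spell out the string; each leaf stores a contiguous block of symbols packed at $\lg U$ bits each, and each internal node stores, for each of its (between $f/2$ and $f$) children, the number of symbols in that child's subtree. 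Since the fanout is polylogarithmic, the height is $O(\log_f m)=O(\frac{\lg m}{\lg\lg m})$, exactly the per-operation budget.

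Next I would implement $\access(i)$ by descending from the root: at each internal node I locate the child whose subtree contains the $i$-th surviving position, which is a predecessor search among the prefix sums of the $O(f)$ child counters. The crucial point is that with $f=\lg^{O(1)} m$ keys a dynamic fusion-tree / word-parallel node answers this search, and also supports the $\pm 1$ adjustments caused by updates, in $O(1)$ time; summed over the $O(\frac{\lg m}{\lg\lg m})$ levels and followed by an $O(1)$ read from the packed leaf, this yields the claimed query time. An insertion or deletion follows the same descent, edits the packed block at the target leaf, updates the subtree counters along the root-to-leaf path, and then rebalances by splitting, merging, or sharing blocks and nodes; weight balance guarantees that the amortized number of structural changes is $O(1)$ per level, so updates also run in $O(\frac{\lg m}{\lg\lg m})$ time.

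For the space bound I would account for payload and overhead separately. The leaves hold all $m$ symbols at $\lg U$ bits apiece, and the balance invariant keeps every block at least a constant fraction full, so the payload is $m\lg U\,(1+o(1))$ bits. If each leaf holds $s$ symbols then there are $O(m/s)$ leaves and $O(m/(sf))$ internal nodes, each carrying $O(f\lg m)$ bits of counters and pointers, for a total overhead of $O(\frac{m\lg m}{s})$ bits; taking $s=\Theta(\lg^2 m)$ drives this down to $O(\frac{m}{\lg m})=o(m)$, i.e.\ $o(m\lg U)$ bits since $\lg U\ge 1$. The hard part will be exactly this choice of $s$: a leaf of $\Theta(\lg^2 m)$ symbols cannot be updated by a naive shift within the time budget, so each leaf must itself be organized as a miniature dynamic string (for instance a second level of the same tree, together with periodic rebuilding to amortize block reorganizations) that supports in-leaf updates in $O(\frac{\lg m}{\lg\lg m})$ time while remaining large enough to suppress the structural overhead. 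Designing and deamortizing this hierarchical blocking so that both the time and space bounds hold simultaneously is the technically delicate core of the construction.
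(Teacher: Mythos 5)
This lemma is not proved in the paper at all: it is imported verbatim from Munro and Nekrich (ESA 2015) and used as a black box, so there is no in-paper argument to compare yours against. Judged on its own, your sketch reproduces the standard architecture behind results of this kind --- a weight-balanced search tree with fanout $\lg^{\epsilon} m$, subtree-size counters navigated by constant-time word-parallel (dynamic fusion-node) predecessor search, and leaves packing $\Theta(\lg^{2} m)$ symbols at $\lg U$ bits each so that the structural overhead is $O(m/\lg m) = o(m)$ bits --- and your height, time, and space accounting at the upper levels is correct.

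The gap is the one you yourself flag and then set aside: the leaf-level organization. A packed block of $\Theta(\lg^{2} m)$ symbols cannot absorb an insertion by shifting in $O(\lg m/\lg\lg m)$ time, and ``a second level of the same tree'' does not work naively, because recursing re-introduces per-node pointer and counter overhead that is no longer negligible at that scale; this is precisely where the technical content of the cited construction lives (packed sub-blocks with slack, word-level shifts of $\Theta(\lg m)$ bits at a time, and careful memory management plus periodic rebuilding to keep both the time and the $o(m)$-bit redundancy bounds). Two further points you should not leave implicit: the parameters $f$ and $s$ depend on $m$, which changes under updates, so a global rebuilding scheme is needed to keep them consistent; and the constant-time search and update of the $O(f)$ counters inside a node needs an explicit mechanism (Patrascu--Thorup dynamic fusion nodes, or lookup tables over packed counter words), since the counters of one node occupy $\omega(1)$ machine words. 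As a reconstruction of a cited theorem your outline is headed in the right direction, but as a proof it stops exactly where the difficulty begins.
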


\subsection{Parameterized strings}
Let $\sSigma$ and $\pSigma$ denote two disjoint sets of symbols.
We call a symbol in $\sSigma$ a \emph{static symbol} (\emph{s-symbol}) 
and a symbol in $\pSigma$ a \emph{parameter symbol} (\emph{p-symbol}).
For a technical reason, $\sSigma$ is considered to be disjoint with the set of integers
and we assume that an s-symbol is lexicographically smaller than any integer.
Let $\ssigma = |\sSigma|$, $\psigma = |\pSigma|$ and $\sigma = \ssigma + \psigma$.
A \emph{parameterized string} (\emph{p-string}) is a string over $(\sSigma \cup \pSigma)$.

For any p-string $w$ of length $n$, the \emph{prev-encoding} $\penc{w}$ of $w$ is the string in $(\sSigma \cup [0..n))^{n}$ 
such that, for any $i~(1 \le i \le n)$,
\begin{equation*}
  \penc{w}[i] =
  \begin{cases}
    w[i]     & \mbox{if $w[i] \in \sSigma$}, \\
    0        & \mbox{if $w[i] \in \pSigma$ and $w[i]$ does not appear in $w[1..i-1]$}, \\
    i - j    & \mbox{otherwise,}
  \end{cases}
\end{equation*}
where $j$ is the largest position in $[1..i-1]$ with $w[i] = w[j]$.
By prev-encoding, all the occurrences of a p-symbol are connected by a single chain
on the array storing the distance to the previous occurrence of the same p-symbol, which
ends with the symbol $0$ at the leftmost occurrence of the p-symbol.
Let $\zeros{\penc{w}}$ denote the the number of $0$ in $\penc{w}$, which is equal to the number of distinct p-symbols in $w$.
For example, $\penc{w} = \idtt{0A03014B}$ and $\zeros{\penc{w}} = 3$ 
for $w = \idtt{xAyxzzyB}$ with $\sSigma = \{ \idtt{A}, \idtt{B} \}$ and $\pSigma = \{ \idtt{x}, \idtt{y}, \idtt{z} \}$.

The following properties hold for prev-encoding $\penc{w}$ of a p-string $w$ of length $n$:
\begin{itemize}
  \item A p-string $w'$ p-matches $w$ if and only if $\penc{w'} = \penc{w}$. \label{PPmatch}
  \item For any $1 \le i \le n$, $\penc{w}[1..i] = \penc{w[1..i]}$. \label{PPprefix}
  \item If $c \in \sSigma$, then $\penc{cw} = c\penc{w}$. \label{PPs}
  \item If $c \in \pSigma$ does not appear in $w$, then $\penc{cw} = 0\penc{w}$. \label{PPpnew}
  \item If $c \in \pSigma$ and $d$ is the leftmost occurrence of $c$ in $w$, 
    then $\penc{cw} = 0 \cdot \penc{w}[1..d-1] \cdot d \cdot \penc{w}[d+1..n]$. \label{PPp}
\end{itemize}

It is easy to see that the prev-encoding has enough information 
to retrieve the p-strings that p-match the original p-string
and we can compute the lexicographically smallest one in linear time:
\begin{lemma}\label{lemma:lex-smallest}
  Given prev-encoding $W$ of a p-string $w$ of length $n$, 
  we can compute the lexicographically smallest p-string $w'$ that p-matches $w$ in $O(n)$ time.
\end{lemma}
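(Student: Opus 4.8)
The plan is to read $w'$ directly off $W = \penc{w}$, fixing the only free choices greedily so that the result is lexicographically minimal.

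First I would separate what is forced from what is free. By the definition of prev-encoding, a position $i$ with $W[i] \in \sSigma$ forces $w'[i] = W[i]$, since a bijection on $\pSigma$ never touches s-symbols. A position with $W[i] = d > 0$ is a non-leftmost occurrence of a p-symbol chained to an earlier one, so it forces $w'[i] = w'[i-d]$. The positions $i$ with $W[i] = 0$ are exactly the leftmost occurrences of the $k = \zeros{W}$ distinct p-symbols of $w$, and these are the only positions where a symbol may be chosen. Hence every $w'$ with $\penc{w'} = W$ arises by fixing an injection $\phi$ from the distinct p-symbols of $w$ (equivalently, from the zero-positions of $W$ in left-to-right order) into $\pSigma$ and propagating it along the chains; conversely any such injection yields a p-matching $w'$, as it extends to a bijection on $\pSigma$ because $k \le \psigma$.

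Second, I would compute $w'$ in a single left-to-right scan. Let $p_1 < p_2 < \cdots$ enumerate $\pSigma$ in increasing order and keep a counter $c$ initialized to $1$. For each $i$ from $1$ to $n$: if $W[i] \in \sSigma$ set $w'[i] := W[i]$; if $W[i] = 0$ set $w'[i] := p_c$ and increment $c$; if $W[i] = d > 0$ set $w'[i] := w'[i-d]$. Every step costs $O(1)$, since the position $i - d < i$ is already filled, so the scan runs in $O(n)$ time, giving the claimed bound.

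Third, I would prove that this greedy assignment is optimal. Order the distinct p-symbols as $q_1, \dots, q_k$ by leftmost occurrence, and let $i_1 < \cdots < i_k$ be those leftmost positions; greedy sets $\phi(q_j) = p_j$. The key observation is that every position strictly before $i_j$ carries either a fixed s-symbol or a copy of $\phi(q_{j'})$ for some $j' < j$, because $q_j, q_{j+1}, \dots$ have not yet occurred. Thus two injections that agree on $q_1, \dots, q_{j-1}$ produce identical prefixes up to position $i_j - 1$, and their first possible disagreement is the value at $i_j$; since only a p-symbol may sit at $i_j$, the smallest value achievable there given the earlier choices is the smallest p-symbol not yet used, and selecting it can only decrease the string. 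An induction on $j$ (an exchange argument replacing any $\phi$ by the greedy choice at the first index where they differ, without increasing the string) shows that $q_j \mapsto p_j$ yields the lexicographically smallest $w'$. The main obstacle is exactly this correctness argument: one must argue carefully that the lexicographic comparison is decided precisely at the leftmost occurrences, so that the interleaved and unmodifiable s-symbols never interfere with the greedy choice, and that minimizing the symbol at each leftmost occurrence in order of occurrence is globally rather than merely locally optimal; the index arithmetic and the $O(1)$-per-position bookkeeping of the second step are routine.
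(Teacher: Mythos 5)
Your proposal is correct and follows essentially the same route as the paper: a single left-to-right scan that copies s-symbols, assigns the smallest unused p-symbol at each $0$, and copies $w'[i-d]$ at each positive entry, with the greedy choice at the leftmost occurrences giving lexicographic minimality. The paper states the optimality argument in one sentence where you spell out the exchange argument, but the algorithm and the underlying reasoning are the same.
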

\begin{proof}
  For increasing positions $i$, we determine $w'[i]$ as follows:
  \begin{itemize}
    \item If $W[i] \in \sSigma$, set $w'[i] = W[i]$.
    \item If $W[i] = 0$, set $w'[i]$ to be the lexicographically smallest p-symbol that has not been used.
    \item If $W[i] = d \in [1..n)$, set $w'[i] = w'[i - d]$.
  \end{itemize}
  Since the p-strings that p-match $w$ differ up to renaming of p-symbols and 
  we choose lexicographically smaller p-symbols in the order of their leftmost occurrences,
  $w'$ becomes the lexicographically smallest.
  It is clear that it runs in linear time.
\end{proof}

Throughout this paper, we consider a p-string $\gtext$ of length $n$ over $(\sSigma \cup \pSigma)$
with the condition that $\gtext[n] = \$$, where $\$$ is the smallest s-symbol that does not appear anywhere else in $\gtext$.
For any integer $i$, let $\gtext_{i}$ denote the cyclic rotation of $\gtext$ such that $\gtext_{i} = \gtext[i'..n]\gtext[1..i'-1]$ for $i' \in [1..n]$ with $i' \equiv i \mod n$.
Let $\psa$ be the permutation of $[1..n]$ such that $\penc{\gtext_{\psa[i]}}$ is the lexicographically $i$-th string in $\{ \penc{\gtext_{i'}} \}_{i'=1}^{n}$.
Due to the existence of end-marker $\$$, $\psa$ is equivalent to the \emph{parameterized suffix array (pSA)} of $\gtext$.

The \emph{parameterized Burrows Wheeler Transform (pBWT)} $\Lstr$ of $\gtext$ is 
a string of length $n$ over the alphabet $(\sSigma \cup [1..\psigma])$
such that, for any $i~(1 \le i \le n)$,
\begin{equation*}
  \Lstr[i] =
  \begin{cases}
    \gtext_{\psa[i]}[n]                   & \mbox{if $\gtext_{\psa[i]}[n] \in \sSigma$}, \\
    \zeros{\penc{\gtext_{\psa[i]}[1..j]}} & \mbox{otherwise,}
  \end{cases}
\end{equation*}
where $j$ is the smallest position in $\gtext_{\psa[i]}$ with $\gtext_{\psa[i]}[n] = \gtext_{\psa[i]}[j]$.

Note that $\Lstr[i]$ holds the information to turn $\penc{\gtext_{\psa[i]}}[1..\ell]$ into $\penc{\gtext_{\psa[i]-1}}[1..\ell+1]$:
\begin{proposition}\label{prop:prepend}
  For any position $i$ and $\ell \in [0..n)$, let $W = \penc{\gtext_{\psa[i]}}[1..\ell]$ and $\hat{W} = \penc{\gtext_{\psa[i]-1}}[1..\ell+1]$.
  The following statements hold (see Fig.~\ref{prop:prepend} for illustration):
  \begin{itemize}
    \item If $\Lstr[i] \in \sSigma$, then $\hat{W} = \Lstr[i]W$.
    \item If $\Lstr[i] \in [1..\psigma]$ and $\Lstr[i] > \zeros{W}$, then $\hat{W} = 0W$.
    \item If $\Lstr[i] \in [1..\psigma]$ and $d$ is the $\Lstr[i]$-th occurrence of $0$ in $W$,
      then $\hat{W} = 0 \cdot W[1..d-1] \cdot d \cdot W[d+1..\ell]$.
  \end{itemize}
\end{proposition}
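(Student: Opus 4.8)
The plan is to reduce the proposition to the three prepending properties of prev-encoding stated above, and then to show that the value $\Lstr[i]$ exactly determines which of the three applies. Throughout, fix $i$, write $p = \psa[i]$, and let $c = \gtext_{p}[n]$ be the last symbol of the rotation $\gtext_{p}$.

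First I would record the key rotational identity. Since $\gtext_{p}$ ends with $c$ and $\gtext_{p-1}$ is the rotation obtained by moving that last symbol to the front, we have $\gtext_{p-1} = c\cdot\gtext_{p}[1..n-1]$, and hence $\gtext_{p-1}[1..\ell+1] = c\cdot\gtext_{p}[1..\ell]$ for every $\ell \in [0..n)$ (this also covers the boundary case $p=1$, where $c = \gtext[n] = \$$ and the two rotations are $\gtext$ and $\gtext_{n}$). Applying the prefix property of prev-encoding then gives $W = \penc{\gtext_{p}[1..\ell]}$ and $\hat{W} = \penc{c\cdot\gtext_{p}[1..\ell]}$, so the three prepending properties, applied to the prepended symbol $c$, yield exactly the three candidate expressions for $\hat{W}$ in the statement. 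What remains is to match each case to the value of $\Lstr[i]$.

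The s-symbol case is immediate: if $c \in \sSigma$ then $\Lstr[i] = c$ by definition, and prepending an s-symbol gives $\hat{W} = \Lstr[i]W$. When $c \in \pSigma$, the definition sets $\Lstr[i] = \zeros{\penc{\gtext_{p}[1..j]}}$ where $j$ is the leftmost occurrence of $c$ in $\gtext_{p}$; equivalently, $\Lstr[i]$ is the rank of $c$ among the distinct p-symbols of $\gtext_{p}$ when they are ordered by first appearance. The observation I would establish is that this rank is faithfully mirrored inside $W$: because restricting to a prefix preserves the relative order of first appearances, and because the $0$s of a prev-encoding mark precisely the leftmost occurrences of p-symbols, the p-symbols occurring in $\gtext_{p}[1..\ell]$ are exactly those of ranks $1,\dots,\zeros{W}$, and for each such $k$ the $k$-th $0$ of $W$ stands at the leftmost occurrence of the rank-$k$ p-symbol.

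With this correspondence the two p-symbol cases follow directly. If $\Lstr[i] > \zeros{W}$, then $c$ has rank larger than the number of distinct p-symbols present in $\gtext_{p}[1..\ell]$, so $c$ does not occur in $\gtext_{p}[1..\ell]$; prepending a fresh p-symbol gives $\hat{W} = 0W$. Otherwise $\Lstr[i] \le \zeros{W}$, and the leftmost occurrence of $c$ in $\gtext_{p}[1..\ell]$ is exactly the position $d$ of the $\Lstr[i]$-th $0$ in $W$; prepending a recurring p-symbol whose leftmost occurrence is at $d$ gives $\hat{W} = 0\cdot W[1..d-1]\cdot d\cdot W[d+1..\ell]$. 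I expect the main obstacle to be the rank-correspondence argument, since it must carefully relate a quantity defined over the whole rotation $\gtext_{p}$ (the rank used to define $\Lstr[i]$) to the purely local structure of the prefix encoding $W$; once that bookkeeping is in place, each case is a one-line appeal to a prepending property.
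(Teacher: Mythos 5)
Your proof is correct and takes exactly the route the paper intends: the paper states this proposition without an explicit proof, treating it as an immediate consequence of the listed prepending properties of prev-encoding together with the definition of $\Lstr[i]$. Your writeup supplies precisely that reasoning, including the one detail worth spelling out --- that $\Lstr[i]$ is the rank of the prepended p-symbol by first appearance in $\gtext_{\psa[i]}$, and that this rank is mirrored by the $0$s of $W$ --- so nothing is missing.
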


\begin{figure}[t]
  \center{%
    \includegraphics[scale=0.38]{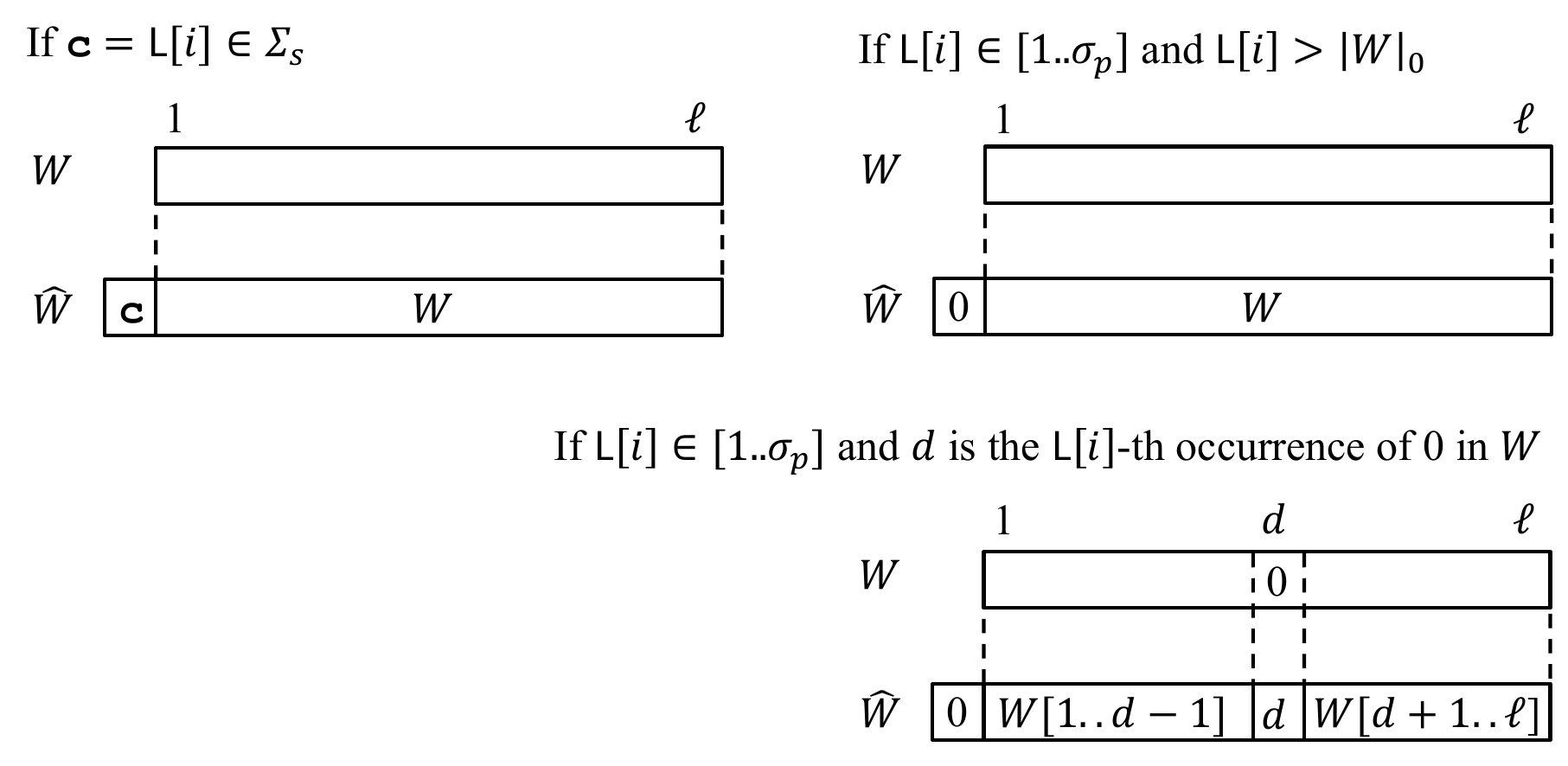}
  }
  \caption{Illustrations for the cases of Proposition~\ref{prop:prepend}.
  }
  \label{fig:prepend}
\end{figure}

The \emph{LF-mapping} $\LF$ for a p-string $\gtext$ is the permutation of $[1..n]$ such that 
$\psa[\LF[i]] \equiv \psa[i] - 1 \mod n$.

Table~\ref{table:arrays} shows an example of $\psa[i]$, $\LF[i]$ and $\Lstr[i]$.
\begin{table}[t]
  \centering
  \begin{tabular}{|c||l|l||c|c|c|l|l|}
  \hline
  \multicolumn{1}{|c||}{$i$} & \multicolumn{1}{c|}{$\gtext_i$} & \multicolumn{1}{c||}{$\penc{\gtext_i}$} & \multicolumn{1}{c|}{$\psa[i]$} & \multicolumn{1}{c|}{$\LF[i]$} & \multicolumn{1}{c|}{$\Lstr[i]$} & \multicolumn{1}{c|}{$\penc{\gtext_{\psa[i]}}$} & \multicolumn{1}{c|}{$\gtext_{\psa[i]}$} \\ \hline
  1   & $\idtt{xyxzzxxyx\$}$  & $\idtt{002013162\$}$  & 10  & 2   & 1            & $\idtt{\$002013162}$  & $\idtt{\$xyxzzxxyx}$ \\
  2   & $\idtt{yxzzxxyx\$x}$  & $\idtt{00013162\$2}$  & 9   & 3   & 2            & $\idtt{0\$20201316}$  & $\idtt{x\$xyxzzxxy}$ \\
  3   & $\idtt{xzzxxyx\$xy}$  & $\idtt{0013102\$24}$  & 8   & 7   & 2            & $\idtt{00\$2420131}$  & $\idtt{yx\$xyxzzxx}$ \\
  4   & $\idtt{zzxxyx\$xyx}$  & $\idtt{010102\$242}$  & 2   & 8   & 2            & $\idtt{00013162\$2}$  & $\idtt{yxzzxxyx\$x}$ \\
  5   & $\idtt{zxxyx\$xyxz}$  & $\idtt{00102\$2429}$  & 5   & 9   & 1            & $\idtt{00102\$2429}$  & $\idtt{zxxyx\$xyxz}$ \\
  6   & $\idtt{xxyx\$xyxzz}$  & $\idtt{0102\$24201}$  & 3   & 4   & 3            & $\idtt{0013102\$24}$  & $\idtt{xzzxxyx\$xy}$ \\
  7   & $\idtt{xyx\$xyxzzx}$  & $\idtt{002\$242013}$  & 7   & 10  & 1            & $\idtt{002\$242013}$  & $\idtt{xyx\$xyxzzx}$ \\
  8   & $\idtt{yx\$xyxzzxx}$  & $\idtt{00\$2420131}$  & 1   & 1   & $\idtt{\$}$  & $\idtt{002013162\$}$  & $\idtt{xyxzzxxyx\$}$ \\  
  9   & $\idtt{x\$xyxzzxxy}$  & $\idtt{0\$20201316}$  & 4   & 6   & 2            & $\idtt{010102\$242}$  & $\idtt{zzxxyx\$xyx}$ \\
  10  & $\idtt{\$xyxzzxxyx}$  & $\idtt{\$002013162}$  & 6   & 5   & 3            & $\idtt{0102\$24201}$  & $\idtt{xxyx\$xyxzz}$ \\
  \hline
  \end{tabular}
  \caption{An example of $\psa[i]$, $\LF[i]$ and $\Lstr[i]$ for a p-string $\gtext=\idtt{xyxzzxxyx\$}$ 
  with $\sSigma = \{ \idtt{\$} \}$ and $\pSigma = \{\idtt{x}, \idtt{y}, \idtt{z} \}$.}
  \label{table:arrays}
\end{table}

\section{Algorithms}\label{sec:algorithms}
In this section, we present our algorithms for the inversion problem of pBWTs.\@

We first present a simple algorithm to see that pBWTs are invertible.
\begin{theorem}\label{theorem:n3}
  Given the pBWT $\Lstr$ of a p-string $\gtext$ of length $n$, 
  we can compute in $O(n^3)$ time and $O(n^2)$ space a p-string that p-matches $\gtext$.
\end{theorem}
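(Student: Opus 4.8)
The plan is to lift the classical matrix inversion of the ordinary BWT to the parameterized setting, replacing ``prepend one symbol'' by the prepend operation of Proposition~\ref{prop:prepend}. I maintain an $n \times n$ array whose $i$-th row is meant to converge to $\penc{\gtext_{\psa[i]}}$, filled left to right over $n$ rounds starting from all-empty rows. In the round that grows the rows from length $\ell$ to $\ell+1$, I first replace the content $W$ of each row $i$ by the length-$(\ell+1)$ string $\hat{W}$ prescribed by $\Lstr[i]$ through the three cases of Proposition~\ref{prop:prepend} (prepend an s-symbol; prepend $0$; or prepend $0$ and overwrite the $\Lstr[i]$-th $0$ of $W$, sitting at some position $d$, with the value $d$), and then \emph{stably} sort the rows. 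The invariant I want to keep is (i) that row $i$ holds the true length-$\ell$ prefix of $\penc{\gtext_{\psa[i]}}$ and (ii) that the rows are in pSA order; the latter is what makes the given value $\Lstr[i]$ the correct instruction for row $i$, and it holds before the first round because the input $\Lstr$ is indexed by pSA order.

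The content part of the invariant is immediate from Proposition~\ref{prop:prepend}: since $\{\penc{\gtext_{\psa[i]-1}}\}_i = \{\penc{\gtext_{\psa[i]}}\}_i$, the strings $\hat{W}$ formed in a round are exactly the length-$(\ell+1)$ prefixes of all rotations, and each is a genuine prefix of a full prev-encoding, so sorting them never contradicts pSA order. The main obstacle is to show that the stable sort also restores pSA order, i.e.\ that the prepend operation is \emph{order-preserving on ties}: whenever the targets of two rows $i < i'$ share their length-$(\ell+1)$ prefix, I must prove $\LF[i] < \LF[i']$, so that breaking the tie by the incoming (pSA) order is correct. Here the parameterized case departs from the ordinary BWT, because the prepend operation may rewrite an interior position of $W$, and I have to rule out that this rewriting reorders tied rows. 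The key observation is that a tie forces the two rewrites to occur at one and the same position $d$ (or to be both absent), since rewriting distinct zero-positions, or rewriting a position left as $0$ in the other row, would already create a difference inside the compared prefix; this common $d$ lies before the first position where the two sources $\penc{\gtext_{\psa[i]}}$ and $\penc{\gtext_{\psa[i']}}$ differ, so the rewrite changes both sources identically there, and the first genuine difference---inherited unchanged from the sources---orders the targets exactly as the sources, giving $\LF[i] < \LF[i']$. Because the unique end-marker $\$$ makes all rotations distinct, every tie is eventually broken, and after $n$ rounds each row equals its full prev-encoding.

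To finish, I locate the row $i^{*}$ with $\Lstr[i^{*}] = \$$; since $\Lstr[i] = \$$ holds exactly when $\psa[i] = 1$, this row now stores $\penc{\gtext}$, and feeding it to Lemma~\ref{lemma:lex-smallest} yields a p-string that p-matches $\gtext$. For the bounds, there are $n$ rounds; within a round the prepend step costs $O(\ell) = O(n)$ per row and hence $O(n^2)$ overall, and the stable sort can be realized by a radix sort of $n$ strings of length at most $n$ over an alphabet of size $O(n)$, again $O(n^2)$. The $n \times n$ array dominates the memory, so the algorithm runs in $O(n^3)$ time and $O(n^2)$ space, as claimed.
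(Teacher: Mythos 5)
Your algorithm is the paper's algorithm: iteratively extend the sorted collection of length-$\ell$ prev-encoded prefixes to length $\ell+1$ via Proposition~\ref{prop:prepend} and re-sort, with the same $O(n^3)$ time and $O(n^2)$ space accounting, and the same appeal to Lemma~\ref{lemma:lex-smallest} at the end. However, the step you single out as ``the main obstacle''---that the stable sort is order-preserving on ties because $\LF[i] < \LF[i']$ whenever two rows $i < i'$ produce equal length-$(\ell+1)$ targets---is false. Take the paper's running example $\gtext=\idtt{xyxzzxxyx\$}$ at $\ell = 0$: rows $i=5$ and $i'=6$ both satisfy $\Lstr[i] \in [1..\psigma]$ and $\Lstr[i] > \zeros{\emptystr}$, so both targets equal the one-symbol string $\idtt{0}$, yet $\LF[5]=9 > 4=\LF[6]$. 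Your argument breaks because it only controls rewrites \emph{inside} the compared length-$(\ell+1)$ window: the two rows' rewrite positions $d$ may differ, or one row may have a rewrite while the other does not, as long as these happen beyond position $\ell$; such a rewrite creates a brand-new difference in the full prev-encodings that can reverse their order relative to the sources. This is exactly why the paper's Lemma~\ref{lemma:lf-fixed} carries the hypothesis ``$\Lstr[i]\in\sSigma$ or $\Lstr[i]\le\zeros{W}$'' (excluding the fresh-$0$ case), and why the Discussion notes that the property ``$i<j$ and $\Lstr[i]=\Lstr[j]$ imply $\LF[i]<\LF[j]$'' holds only for the prev$_{\infty}$-encoding, not for the encoding used here.

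Fortunately the false claim is not needed, and the part of your argument you dismiss as ``immediate'' is the entire proof. Treat the rows as a \emph{multiset} of strings rather than as identity-carrying objects, as the paper does with $\Pref_{\ell}$: after sorting, the string at position $j$ equals $\penc{\gtext_{\psa[j]}}[1..\ell]$ \emph{as a string value} (tied entries are literally identical strings, so which physical row lands where is immaterial), and Proposition~\ref{prop:prepend} depends only on that string value and on $\Lstr[j]$, so applying it at position $j$ yields exactly $\penc{\gtext_{\psa[j]-1}}[1..\ell+1]$ and the multiset of length-$(\ell+1)$ prefixes is reproduced correctly. No stability and no tie-ordering lemma is required. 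With that repair your write-up coincides with the paper's proof; locating $\penc{\gtext}$ via the unique row with $\Lstr[i^*]=\$$ is a correct final step.
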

\begin{proof}
  For any $\ell \ge 0$ let $\Pref_\ell$ denote the multiset of prev-encoded strings of the form $\penc{\gtext_{i}}[1..\ell]$, 
  i.e., $\Pref_{\ell} = \multiset{\penc{\gtext_{i}}[1..\ell]}_{i = 1}^{n}$.
  We show how to compute $\Pref_{\ell+1}$ from $\Pref_\ell$ in $O(n \ell) = O(n^2)$ time.
  First we sort $\Pref_\ell$ in $O(n \ell) = O(n^2)$ time by radix sort with radix of size $\Theta(n)$.
  Then the $i$-th element of the sorted $\Pref_\ell$ is equivalent to $\penc{\gtext_{\psa[i]}}[1..\ell]$.
  Since Proposition~\ref{prop:prepend} enables us to compute $\penc{\gtext_{\psa[i]-1}}[1..\ell+1]$ from $\penc{\gtext_{\psa[i]}}[1..\ell]$ in $O(\ell)$ time,
  we get $\Pref_{\ell+1}$ in $O(n \ell) = O(n^2)$ time. 
  Fig.~\ref{fig:n3} shows an example of the steps of the process.

  Repeating the above process from $\Pref_{0} = \multiset{\emptystr}_{i = 1}^{n}$ to $\Pref_{n} = \multiset{\penc{\gtext_{i}}}_{i = 1}^{n}$, 
  we get $\penc{\gtext}$ in $O(n^3)$ time.
  During the process we only maintain the latest $\Pref_{\ell}$ (discarding old ones), which takes $O(n^2)$ space to store.
  Given $\penc{\gtext}$, we can compute a p-string that p-matches $\gtext$ using Lemma~\ref{lemma:lex-smallest} in $O(n)$ time.
\end{proof}

\begin{figure}[t]
  \center{%
    \includegraphics[scale=0.38]{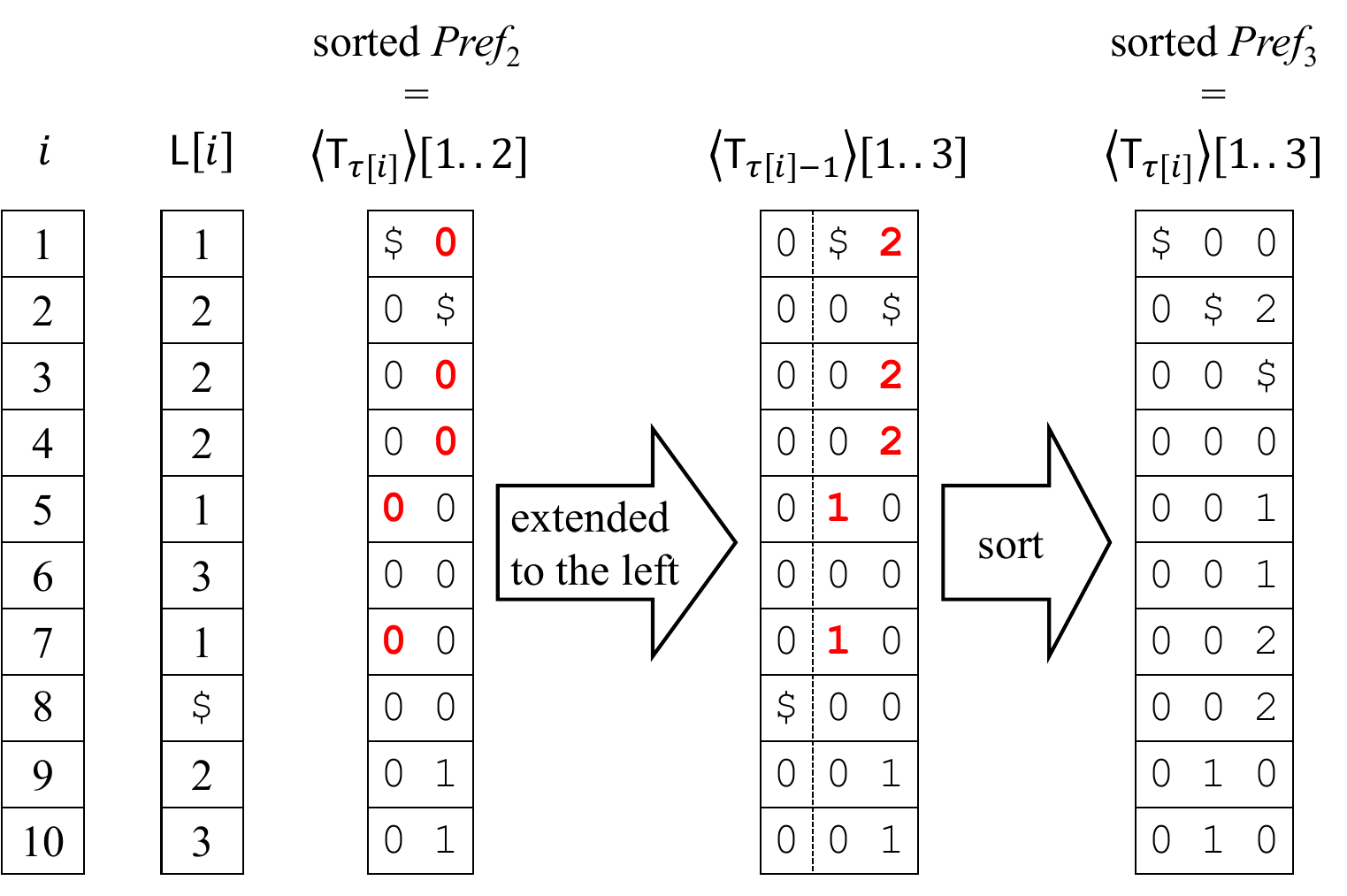}
  }
  \caption{Illustration for the computing process from $\penc{\gtext_{\psa[i]}}[1..2]$ to $\penc{\gtext_{\psa[i]}}[1..3]$ 
    for our running example $\gtext=\idtt{xyxzzxxyx\$}$ of Table~\ref{table:arrays}.
    When $\penc{\gtext_{\psa[i]}}[1..2]$ is extended to the left according to Proposition~\ref{prop:prepend}, 
    the 0s in red are turned into the distance to the beginning position in $\penc{\gtext_{\psa[i]-1}}[1..3]$.
  }
  \label{fig:n3}
\end{figure}

Next we show how to improve the algorithm to run in $O(n^2)$ time and $O(n)$ space.
From now on $\Pref_\ell$ is conceptual as we do not store them explicitly.
Instead we consider an integer array $\GroupA_{\ell}$ of length $n$ to record in $\GroupA_{\ell}[i]$ 
the tentative lexicographic rank of $\penc{\gtext_{\psa[i]-1}}$ at the moment we have processed $\Pref_\ell$
so that the lexicographic rank of $\penc{\gtext_{\psa[i]-1}}[1..\ell+1]$ can essentially be determined by 
radix sort with keys of the form $(\GroupA_{\ell}[i], \penc{\gtext_{\psa[i]-1}}[\ell+1])$.
Our algorithm gradually refines $\GroupA_{\ell}$ toward the LF-mapping while increasing $\ell$.
When $\GroupA_{\ell}$ becomes a permutation, a p-string that p-matches $\gtext$ is obtained using Lemma~\ref{lemma:lf2str}:
\begin{lemma}\label{lemma:lf2str}
  Given constant-time access to the pBWT $\Lstr$ and LF-mapping $\LF$ for a p-string $\gtext$ of length $n$, 
  we can compute in $O(n\frac{\lg\psigma}{\lg\lg\psigma})$ time a p-string $\gtext'$ that p-matches $\gtext$,
  where $\psigma \le n$ is the number of p-symbols used in $\gtext$.
\end{lemma}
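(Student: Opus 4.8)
The plan is to invert $\Lstr$ by a single backward walk along the LF-mapping, exactly as in the classical BWT inversion, while decoding each parameter symbol on the fly with a dynamically maintained list. Because $\$$ is the smallest s-symbol and occurs only once, $\penc{\gtext_n}$ begins with $\$$, which is smaller than the leading symbol of every other rotation's prev-encoding (a larger s-symbol or a $0$); hence $\gtext_n$ is lexicographically smallest, $\psa[1]=n$, and the walk starts at row $r_0=1$. Putting $r_{k+1}=\LF[r_k]$ gives $\psa[r_k]=n-k$, so $r_0,r_1,\dots,r_{n-1}$ enumerate the rotations $\gtext_n,\gtext_{n-1},\dots,\gtext_1$, and step $k$ reveals $\gtext_{\psa[r_k]}[n]=\gtext[\psa[r_k]-1]$; thus the positions $n-1,n-2,\dots,1,n$ of $\gtext$ are recovered one per step. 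Whenever $\Lstr[r_k]\in\sSigma$ I simply place that s-symbol at the corresponding position.

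The heart of the argument is decoding the p-symbol case with a move-to-front list. I would maintain a list $\mathcal{M}$ over the names $[1..\psigma]$ with the invariant that $\mathcal{M}$ is the sequence of distinct p-symbols of the current rotation $\gtext_{\psa[r_k]}$ in order of first occurrence. I name the p-symbols by their first-occurrence order in $\gtext$, so $\mathcal{M}$ is initialized to $[1,2,\dots,\psigma]$, which is precisely the first-occurrence order of $\gtext_n=\$\gtext[1..n-1]$; the value $\psigma$ is obtained in $O(n)$ time, e.g.\ as the largest integer entry of $\Lstr$. By the definition of $\Lstr$, the entry $\Lstr[r_k]$ counts the distinct p-symbols up to and including the first occurrence of the revealed symbol, so it is exactly that symbol's rank $t=\Lstr[r_k]$ in $\mathcal{M}$; I therefore output $\mathcal{M}[t]$, and passing to the next rotation (which prepends this symbol and drops its trailing copy, as in Proposition~\ref{prop:prepend}) should turn $\mathcal{M}$ into its move-to-front update that promotes the $t$-th element to the front.

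The main obstacle is proving this move-to-front invariant, which needs a short case analysis on whether the revealed symbol $c$ occurs only as the last symbol of the rotation or also earlier, taking $\ell=n-1$ and $W=\penc{\gtext_{\psa[r_k]}}[1..n-1]$ in Proposition~\ref{prop:prepend}. If $c$ recurs inside the rotation, then $t\le\zeros{W}$ and the third case applies: deleting the final copy leaves the first-occurrence order of the remaining symbols intact, and the new leading $0$ promotes $c$ to the front. If $c$ is unique, then $\zeros{W}=\psigma-1$, $\Lstr[r_k]=\psigma>\zeros{W}$, the second case applies, and $c$ (which sits last in $\mathcal{M}$) is reinserted at the front. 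In both cases the outcome is the move-to-front of position $t$, and s-symbol steps leave $\mathcal{M}$ unchanged since prepending an s-symbol adds no p-symbol. I would also confirm on the running example of Table~\ref{table:arrays} that the walk reproduces $\gtext$ up to renaming.

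Finally, for the running time I would realize $\mathcal{M}$ with the dynamic string of Lemma~\ref{lem:drs} over alphabet $[1..\psigma]$ of current length $\psigma$: each p-symbol step performs one access, one deletion, and one insertion, each costing $O(\frac{\lg\psigma}{\lg\lg\psigma})$, while reading $\Lstr[r_k]$ and $\LF[r_k]$ and placing a symbol cost $O(1)$. Over the $n$ steps this yields the claimed $O(n\frac{\lg\psigma}{\lg\lg\psigma})$ bound, and the reconstructed $\gtext'$ p-matches $\gtext$ because the consistent renaming preserves the equality pattern of p-symbols, hence $\penc{\gtext'}=\penc{\gtext}$.
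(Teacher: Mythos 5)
Your proof is correct and follows essentially the same route as the paper's: a backward walk along $\LF$ starting from row $1$, decoding each p-symbol via a dynamically maintained list of p-symbols in first-occurrence order with move-to-front updates, implemented with the dynamic string of Lemma~\ref{lem:drs}. The only (harmless) deviation is that your list $\mathcal{M}$ holds all $\psigma$ p-symbols of the full cyclic rotation from the outset---so you need $\psigma$ upfront, and your claim that it equals the largest integer entry of $\Lstr$ is true but deserves a one-line justification (take the p-symbol whose first occurrence in $\gtext$ is latest and the rotation starting just after its last occurrence)---whereas the paper grows its list $Q$ lazily from the empty string, distinguishing the cases $\Lstr[i]\le h_k$ and $\Lstr[i]>h_k$ instead of your case split on whether the revealed symbol recurs in the rotation.
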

\begin{proof}
  We determine $\gtext'[k]$ for $k = n, n-1, \dots, 1$ while keeping track of the position $i$ with $\psa[i] = k$ 
  using the formula $\psa[\LF[i]] = k-1 \mod n$.
  Recall that our starting point for $k = n$ is $i = 1$ and $\gtext'[n] = \$$.

  Suppose that we have determined $\gtext'[k..n]$, which contains $h_{k}$ distinct p-symbols from $\{ x_{h} \}_{h=1}^{h_{k}}$.
  At the moment we also have maintained a dynamic string for the permutation $Q$ of 
  $\{ x_{h} \}_{h=1}^{h_{k}}$ sorted in the increasing order of their leftmost occurrences in $\gtext'[k..n]$.
  Then we determine $\gtext'[k-1]$ and update $Q$ as follows:
  \begin{itemize}
    \item If $\Lstr[i] \in \sSigma$, we set $\gtext'[k-1] = \Lstr[i]$. We do nothing for $Q$.
    \item If $\Lstr[i] \in [1..\psigma]$ and $\Lstr[i] > h_{k}$, we use a new p-symbol $x_{h_{k}+1}$ for $\gtext'[k-1]$.
      We insert $x_{h_{k}+1}$ into the beginning of $Q$.
    \item If $\Lstr[i] \in [1..\psigma]$ and $\Lstr[i] \le h_{k}$, we set $\gtext'[k-1] = Q[\Lstr[i]]$.
      We delete $Q[\Lstr[i]]$ in $Q$ and reinsert it into the beginning of $Q$.
  \end{itemize}

  Since $\gtext'[k]$ is chosen according to Proposition~\ref{prop:prepend} so that $\penc{\gtext'[k..n]} = \penc{\gtext[k..n]}$,
  the resulting $\gtext'$ p-matches $\gtext$.
  Initially $Q$ is set to be $\emptystr$ and
  there are $O(n)$ insert/delete queries for $Q$ of length at most $\psigma$, 
  which takes a total of $O(n\frac{\lg\psigma}{\lg\lg\psigma})$ time using a dynamic string of Lemma~\ref{lem:drs}.
  Meanwhile other tasks can be done in $O(n)$ time.
\end{proof}

\begin{theorem}\label{theorem:n2}
  Given the pBWT $\Lstr$ of a p-string $\gtext$ of length $n$, 
  we can compute in $O(n^2)$ time and $O(n)$ space a p-string that p-matches $\penc{\gtext}$.
\end{theorem}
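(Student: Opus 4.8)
The plan is to avoid materializing the multisets $\Pref_\ell$ of Theorem~\ref{theorem:n3} and instead carry, across increasing $\ell$, three length-$n$ arrays indexed by the (fixed) rank $i$: the tentative rank $\GroupA_\ell[i]$ of $\penc{\gtext_{\psa[i]-1}}$, the symbol $D_\ell[i] = \penc{\gtext_{\psa[i]}}[\ell]$, and the zero-count $\ZeroA_\ell[i] = \zeros{\penc{\gtext_{\psa[i]}}[1..\ell]}$. The role of $D_\ell$ and $\ZeroA_\ell$ is that they carry exactly the information Proposition~\ref{prop:prepend} needs to reveal the next symbol $c_i := \penc{\gtext_{\psa[i]-1}}[\ell+1]$ of the prepended string from only $W[\ell]$ and $\zeros{W}$ of $W = \penc{\gtext_{\psa[i]}}[1..\ell]$. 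Reading off Proposition~\ref{prop:prepend}, for $\ell \ge 1$ we get $c_i = \ell$ in the single special case that $\Lstr[i] \in [1..\psigma]$, $D_\ell[i] = 0$, and $\Lstr[i] = \ZeroA_\ell[i]$ (exactly when the replaced $0$ occupies the last position of $W$), while $c_i = D_\ell[i]$ in every other case; the base case $\ell = 0$ sets $c_i = \Lstr[i]$ or $c_i = 0$ according to whether $\Lstr[i]$ is static. Thus each $c_i$ is computed in $O(1)$ time from $\Lstr[i]$, $D_\ell[i]$, $\ZeroA_\ell[i]$, and $\ell$.

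Given the keys $(\GroupA_\ell[i], c_i)$, the new ranks $\GroupA_{\ell+1}[i]$ would be obtained by a radix sort over an alphabet of size $O(n)$ (static symbols placed below integers), assigning equal ranks to equal keys; this is correct because $\GroupA_\ell[i]$ already sorts the length-$\ell$ prefixes $\penc{\gtext_{\psa[i]-1}}[1..\ell]$ and $c_i$ is the next symbol. The hard part is updating $D_\ell, \ZeroA_\ell$ into $D_{\ell+1}, \ZeroA_{\ell+1}$ within $O(n)$ space: the intended assignment is $D_{\ell+1}[\LF[i]] = c_i$ (and likewise the zero-count, which changes in an $O(1)$ way dictated by the three cases of Proposition~\ref{prop:prepend}), yet $\LF$ is precisely what we do not know. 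I would resolve this using two structural facts. First, $D_{\ell+1}$ and $\ZeroA_{\ell+1}$ are constant on each block of ranks sharing the same length-$(\ell+1)$ prefix, so one representative $c_i$ (resp. zero-count) per value of $\GroupA_{\ell+1}$ determines them, the values agreeing across a block precisely because its members have identical prefixes. Second, since the ranks are sorted by the full strings $\penc{\gtext_{\psa[\cdot]}}$, the ranks sharing a fixed length-$(\ell+1)$ prefix form a contiguous interval, and the size of the interval labelled by value $g$ equals the multiplicity of $g$ in $\GroupA_{\ell+1}$ (as $\LF$ is a bijection carrying backward strings to forward strings). Hence a single counting pass over $\GroupA_{\ell+1}$ yields, via prefix sums, the contiguous rank-interval of each value $g$; writing the representative symbol and zero-count of $g$ across that interval produces $D_{\ell+1}$ and $\ZeroA_{\ell+1}$ indexed by rank, all in $O(n)$ time and space. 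I expect this bookkeeping---recovering where each group lands in rank order without knowing $\LF$---to be the main obstacle.

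Starting from $\GroupA_0 \equiv 1$, $D_0$ undefined, and $\ZeroA_0 \equiv 0$, I would iterate the refinement while incrementing $\ell$. Because $\gtext[n] = \$$ makes all cyclic rotations distinct, the prefixes eventually separate; I stop as soon as $\GroupA_\ell$ attains $n$ distinct values, at which moment $\GroupA_\ell$ is a permutation equal to $\LF$ (detectable from the largest rank produced by the sort). At most $n$ iterations of $O(n)$ work each give $O(n^2)$ time, and only a constant number of length-$n$ arrays are retained, giving $O(n)$ space. Finally I feed $\Lstr$ and the recovered $\LF$ to Lemma~\ref{lemma:lf2str} to output a p-string $\gtext'$ with $\penc{\gtext'} = \penc{\gtext}$ in $O(n \frac{\lg \psigma}{\lg \lg \psigma})$ additional time, which is within the $O(n^2)$ budget, completing the proof.
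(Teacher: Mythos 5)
Your proposal is correct and follows essentially the same route as the paper's proof: the same three length-$n$ arrays ($\GroupA_\ell$, the last-symbol array $\EndA_\ell$/$D_\ell$, and the zero-count $\ZeroA_\ell$), the same $O(1)$ computation of $\penc{\gtext_{\psa[i]-1}}[\ell+1]$ via Proposition~\ref{prop:prepend}, the same radix-sort refinement of ranks, and the same termination and hand-off to Lemma~\ref{lemma:lf2str}. Your ``hard part''---re-indexing $D_{\ell+1}$ and $\ZeroA_{\ell+1}$ by rank without knowing $\LF$, justified by the fact that $i \mapsto \psa[i]-1$ is a bijection so the sorted multisets of backward-extended and forward prefixes coincide---is exactly the step the paper performs (implicitly) when it writes $\EndA_{\ell+1}[k] = \penc{\gtext_{\psa[S[k]]-1}}[\ell+1]$, and you simply make its correctness explicit.
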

\begin{proof}
  For any $W \in \Pref_{\ell}$ let $\interval{W}$ be the maximal interval of p-encoded strings that are prefixed by $W$ in the sorted $\Pref_{\ell}$, i.e.,
  $\interval{W} = [|\multiset{ X \in \Pref_{\ell} \mid X < W }|+1..|\multiset{ X \in \Pref_{\ell} \mid X \le W }|]$.
  Since $\LF[i] \in \interval{\penc{\gtext_{\psa[i]-1}}[1..\ell]}$, we can determine $\LF[i]$ if the interval is a singleton.

  For some $\ell \in [1..n)$, suppose that $\GroupA_{\ell}[i]$ is set to the smallest integer of $\interval{\penc{\gtext_{\psa[i]-1}}[1..\ell]}$.
  Suppose also that we have two arrays $\ZeroA_{\ell}$ and $\EndA_{\ell}$ of length $n$ each such that 
  $\ZeroA_{\ell}[i] = \zeros{\penc{\gtext_{\psa[i]}}[1..\ell-1]}$ and $\EndA_{\ell}[i] = \penc{\gtext_{\psa[i]}}[\ell]$.
  See Fig.~\ref{fig:n2} for an example.
  Below we show how to compute $\GroupA_{\ell+1}$, $\ZeroA_{\ell+1}$ and $\EndA_{\ell+1}$ using $\GroupA_{\ell}$, $\ZeroA_{\ell}$ and $\EndA_{\ell}$ in $O(n)$ time.

  By Proposition~\ref{prop:prepend} we can compute $\penc{\gtext_{\psa[i]-1}}[\ell+1]$ in $O(1)$ time as
  \begin{equation*}
    \penc{\gtext_{\psa[i]-1}}[\ell+1] =
    \begin{cases}
      \ell            & \mbox{if $\ZeroA_{\ell}[i] = \Lstr[i]-1$ and $\EndA_{\ell}[i] = 0$}, \\
      \EndA_{\ell}[i]  & \mbox{otherwise.}
    \end{cases}
  \end{equation*}
  We then sort the set $\{ i \}_{i = 1}^{n}$ of indexes with keys of the form $(\GroupA_{\ell}[i], \penc{\gtext_{\psa[i]-1}}[\ell+1])$
  by radix sort in $O(n)$ time and space.
  While incrementing position $k$ on the sorted list $S$ of indexes, 
  we keep track of the smallest index $k'$ such that 
  $(\GroupA_{\ell}[S[k']], \penc{\gtext_{\psa[S[k']]-1}}[\ell+1]) = (\GroupA_{\ell}[S[k]], \penc{\gtext_{\psa[S[k]]-1}}[\ell+1])$,
  and set $\GroupA_{\ell+1}[S[k]] = k'$.
  Also, $\EndA_{\ell+1}[k] = \penc{\gtext_{\psa[S[k]]-1}}[\ell+1]$ and $\ZeroA_{\ell+1}[k] = \ZeroA_{\ell}[k] + b$,
  where $b = 1$ if $\penc{\gtext_{\psa[S[k]]-1}}[\ell+1] = 0$ and otherwise $b = 0$.
  Since all entries of $\GroupA_{\ell+1}$, $\ZeroA_{\ell+1}$ and $\EndA_{\ell+1}$ are filled in a single scan on $S$, it takes $O(n)$ time.

  The algorithm starts the above iterations from $\ell = 1$.
  First, we compute the sorted $\Pref_{1}$ in $O(n)$ time using the same procedure with the algorithm of Theorem~\ref{theorem:n3},
  which allows us to compute $\EndA_{1}[i]$ and $\GroupA_{1}[i]$ for every $i$.
  Obviously $\ZeroA_{1}$ is the zero initialized array.
  We then increment $\ell$ until $\GroupA_{\ell}$ becomes a permutation, which takes $O(n^2)$ time in total.
  Since the permutation is equivalent to the LF-mapping, we can get a p-string that p-matches $\gtext$ in $O(n^2)$ time using Lemma~\ref{lemma:lf2str}.
  During the process we only maintain the latest $\GroupA_{\ell+1}$, $\ZeroA_{\ell+1}$ and $\EndA_{\ell+1}$ (discarding old ones), which takes $O(n)$ space.
\end{proof}

\begin{figure}[t]
  \center{%
    \includegraphics[scale=0.38]{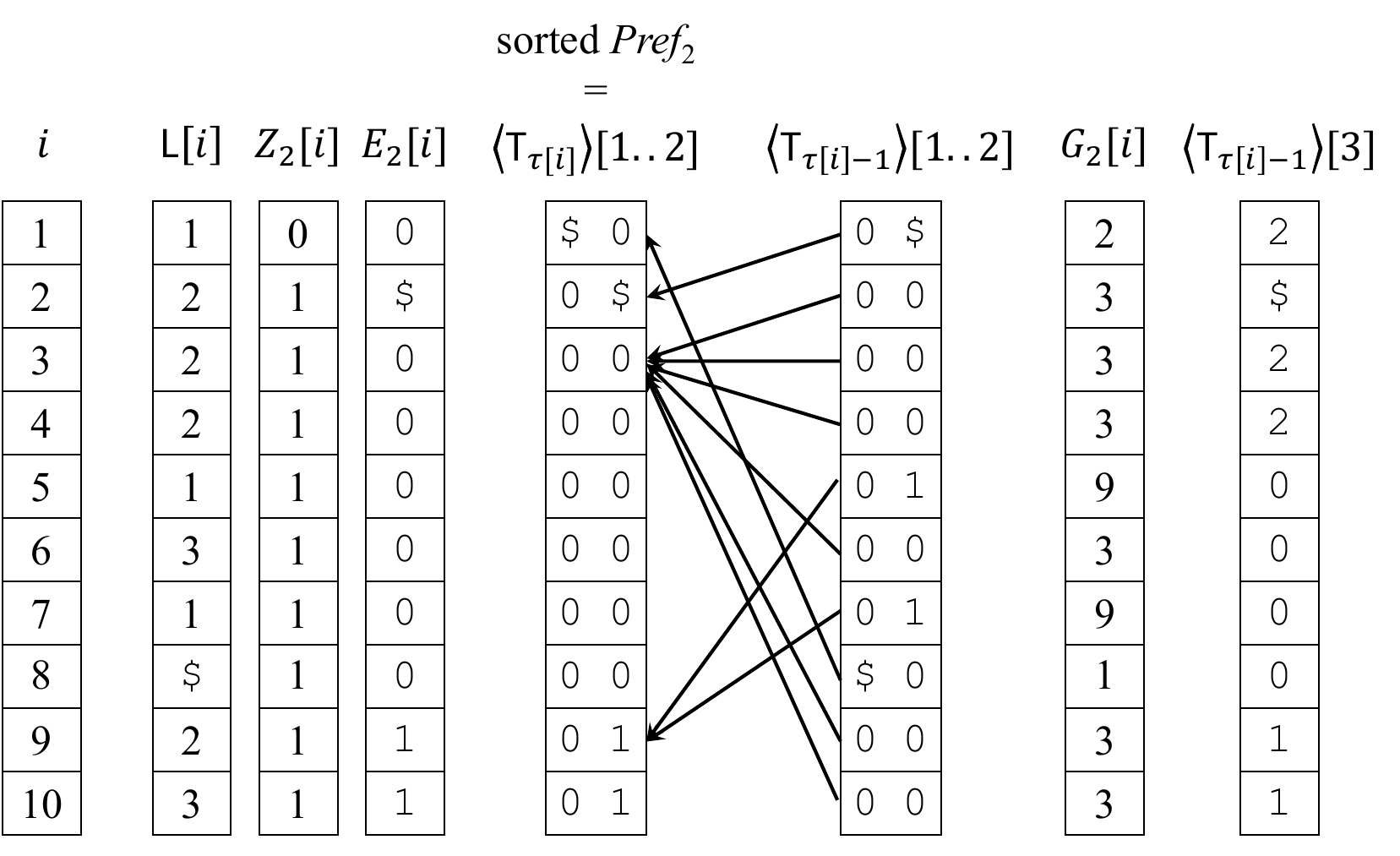}
  }
  \caption{Illustration of $\GroupA_{2}[i]$, $\ZeroA_{2}[i]$, $\EndA_{2}[i]$ and $\penc{\gtext_{\psa[i]-1}}[3]$
    for our running example $\gtext=\idtt{xyxzzxxyx\$}$ of Table~\ref{table:arrays}.
    $\GroupA_{2}[i]$ points to the smallest position at which $\penc{\gtext_{\psa[i]-1}}[1..2]$ appears in the sorted $\Pref_{2}$.
    Remark that we do not have $\penc{\gtext_{\psa[i]}}[1..2]$ and $\penc{\gtext_{\psa[i]-1}}[1..2]$ explicitly.
    Still we can compute $\penc{\gtext_{\psa[i]-1}}[3]$ from $\Lstr[i]$, $\ZeroA_{2}[i]$ and $\EndA_{2}[i]$.
    By sorting the pairs of the form $(\GroupA_{2}[i], \penc{\gtext_{\psa[i]-1}}[3])$, 
    we can obtain the refined intervals with prefixes of length 3.
  }
  \label{fig:n2}
\end{figure}

\section{Discussion}
In this paper, we showed that pBWTs are invertible in $O(n^2)$ time and $O(n)$ space.
The algorithm can easily be modified to work on a variant of pBWTs defined with prev$_{\infty}$-encoding of~\cite{2021KimC_SimplFmIndexForParam}
in which the leftmost occurrence of a p-symbol is replaced with the largest symbol $\infty$ instead of $0$.
It is known that the prev$_{\infty}$-encoding has a better property than the original prev-encoding to work on pBWTs:
On pBWTs with prev$_{\infty}$-encoding,
it holds that $\LF[i] < \LF[j]$ for any positions $i < j$ with $\Lstr[i] = \Lstr[j]$.
The modified version of our algorithm runs in $O(n^2)$ time and $O(n)$ space as it does not take advantage of the property of prev$_{\infty}$-encoding.

An obvious open question is whether $o(n^2)$-time inversion is possible.
A possible direction for improvement is to reduce the amount of steps needed to infer the LF-mapping.
For example, there are some cases where we are able to pinpoint $\LF[i]$ 
even if $\interval{\penc{\gtext_{\psa[i]-1}}[1..\ell]}$ contains two or more integers, using the following lemma.
\footnote{The property has been used in pBWT-based data structures (e.g., see cases (b) and (c1) of Lemma 1 of~\cite{2017GangulyST_PbwtAchievSuccinDataStruc_SODA}). We rephrase it to fit our context and give a proof for completeness.}
\begin{lemma}\label{lemma:lf-fixed}
  For any $\ell \in [1..n]$ and $i \in [1..n)$, 
  let $W = \penc{\gtext_{\psa[i]}}[1..\ell-1]$ and $\hat{W} = \penc{\gtext_{\psa[i]-1}}[1..\ell]$.
  If $\Lstr[i] \in \sSigma$ or $\Lstr[i] \le \zeros{W}$, then it holds that
  \[
  \LF[i] = |\{ X \in \Pref_{\ell+1} \mid X < \hat{W} \}| + |\{ j \in [1..i] \mid \penc{\gtext_{\psa[j]-1}}[1..\ell] = \hat{W} \}|.
  \]
\end{lemma}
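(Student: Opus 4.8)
The plan is to read $\LF[i]$ as the lexicographic rank of $\penc{\gtext_{\psa[i]-1}}$ among all of $\{\penc{\gtext_{i'}}\}_{i'=1}^{n}$, and to split this rank into two parts: the number of rotations whose length-$\ell$ prev-encoding strictly precedes $\hat{W}$, and the rank of $\penc{\gtext_{\psa[i]-1}}$ inside the block of rotations whose length-$\ell$ prev-encoding equals $\hat{W}$. The first part will match the first term of the claim and the second part the second term; essentially all the work is in identifying the in-block rank with the occurrence count $|\{ j \le i : \penc{\gtext_{\psa[j]-1}}[1..\ell] = \hat{W} \}|$.

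First I would set up the split. Because the unique end-marker $\$$ forces all $\penc{\gtext_{i'}}$ to be pairwise distinct, the rank equals $|\{ i' : \penc{\gtext_{i'}} \le \penc{\gtext_{\psa[i]-1}} \}|$. Classifying each $i'$ by comparing $\penc{\gtext_{i'}}[1..\ell]$ against $\hat{W}$, those with a strictly smaller length-$\ell$ prefix always contribute and those with a strictly larger one never do; moreover $X < \hat{W} \iff X[1..\ell] < \hat{W}$ for $X$ of length $\ell+1$, so this count is exactly $|\{ X \in \Pref_{\ell+1} \mid X < \hat{W} \}|$, the first term. What remains is the in-block rank over $G = \{ i' : \penc{\gtext_{i'}}[1..\ell] = \hat{W} \}$. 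Re-indexing $G$ through the bijection $i' \leftrightarrow j$ given by $i' \equiv \psa[j]-1 \pmod{n}$ turns $G$ into $\{ j : \penc{\gtext_{\psa[j]-1}}[1..\ell] = \hat{W} \}$, so the task reduces to the order-preservation statement: for every such $j$, $\penc{\gtext_{\psa[j]-1}} \le \penc{\gtext_{\psa[i]-1}} \iff j \le i$, equivalently $\iff \penc{\gtext_{\psa[j]}} \le \penc{\gtext_{\psa[i]}}$.

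The heart of the argument, and the step I expect to be hardest, is showing that the hypothesis ($\Lstr[i] \in \sSigma$ or $\Lstr[i] \le \zeros{W}$) forces all block members to prepend in the same way. Concretely I would prove that every $j$ with $\penc{\gtext_{\psa[j]-1}}[1..\ell] = \hat{W}$ satisfies $\penc{\gtext_{\psa[j]}}[1..\ell-1] = W$ and that the prepended encoding is a pure one-position shift of $\penc{\gtext_{\psa[j]}}$ on all positions beyond $\ell$. In the static case this is immediate from $\penc{cw} = c\penc{w}$. When $\Lstr[i] \le \zeros{W}$, the fixed entry $\hat{W}[d+1] = d$ (with $d$ the position of the $\Lstr[i]$-th $0$ in $W$) pins down, for every block member, that the prepended symbol is a p-symbol whose first occurrence in $\gtext_{\psa[j]}$ sits at the same position $d$; a new p-symbol is impossible here since it would require $\hat{W}[d+1] = W[d]$ with $W[d] = d$, an invalid back-pointer. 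Hence the only altered entry is the $0$ at position $d \le \ell-1$, already absorbed into the fixed $\hat{W}$, while every later occurrence keeps its shifted back-pointer, so past position $\ell$ the prepend is a common shift across the block. This uniformity is precisely what the excluded case $\Lstr[i] > \zeros{W}$ lacks: there the prepended symbol first recurs only beyond position $\ell$, at a place depending on $j$, so the $0 \to$ distance change lands in the comparison tail and order-preservation can fail.

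Finally I would close the argument. Since all block members share $W$ in their originals and $\hat{W}$ in their prepends, and the prepend acts as a common shift past position $\ell$, the first position where $\penc{\gtext_{\psa[j]-1}}$ and $\penc{\gtext_{\psa[i]-1}}$ differ corresponds under the shift to the first position where $\penc{\gtext_{\psa[j]}}$ and $\penc{\gtext_{\psa[i]}}$ differ, with the same sign; the unique $\$$ ensures distinct members already differ within the first $n-1$ positions, so no comparison hinges only on the truncated last entry. This gives the order-preservation equivalence, hence the in-block rank equals $|\{ j \le i : \penc{\gtext_{\psa[j]-1}}[1..\ell] = \hat{W} \}|$, and summing the two terms yields the stated formula for $\LF[i]$.
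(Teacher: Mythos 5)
Your proof is correct and follows essentially the same route as the paper's: it reduces the identity to showing that, within the block of positions $j$ with $\penc{\gtext_{\psa[j]-1}}[1..\ell] = \hat{W}$, the lexicographic order of the full strings $\penc{\gtext_{\psa[j]-1}}$ agrees with the order of $j$, and establishes this by observing that under the stated hypothesis every block member prepends in the same way, so beyond position $\ell$ the prev-encoding is merely shifted and the first mismatch (which occurs before the truncated last position thanks to the end-marker) is preserved with its sign. You spell out the rank decomposition and the uniformity of the prepend in somewhat more detail than the paper does, but the key reduction and mechanism are identical.
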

\begin{proof}
  It is enough to show that $\penc{\gtext_{\psa[j]-1}} < \penc{\gtext_{\psa[j']-1}}$ for any positions $j < j'$ with $\penc{\gtext_{\psa[j]-1}}[1..\ell] = \penc{\gtext_{\psa[j']-1}}[1..\ell] = \hat{W}$.
  Let $k = \lcp(\gtext_{\psa[j]}, \gtext_{\psa[j']})$.
  Note that $k$ is at least $\ell - 1$ since $\penc{\gtext_{\psa[j]}}[1..\ell-1] = \penc{\gtext_{\psa[j']}}[1..\ell-1] = W$ 
  is necessary to have $\penc{\gtext_{\psa[j]-1}}[1..\ell] = \penc{\gtext_{\psa[j']-1}}[1..\ell] = \hat{W}$.
  Also $k$ is at most $n-1$ due to $\gtext_{\psa[j]} \neq \gtext_{\psa[j']}$.

  If $\Lstr[i] \in \sSigma$, it follows from Proposition~\ref{prop:prepend} that
  $\penc{\gtext_{\psa[j]-1}} = \Lstr[j]\penc{\gtext_{\psa[j]}}[1..n-1]$ 
  and $\penc{\gtext_{\psa[j]-1}} = \Lstr[j']\penc{\gtext_{\psa[j']}}[1..n-1]$.
  If $\Lstr[i] \le \zeros{W}$, it follows from Proposition~\ref{prop:prepend} that
  $\penc{\gtext_{\psa[j]-1}} = \hat{W} \cdot \penc{\gtext_{\psa[j]}}[\ell..n-1]$ 
  and $\penc{\gtext_{\psa[j']-1}} = \hat{W} \cdot \penc{\gtext_{\psa[j']}}[\ell..n-1]$
  with $\hat{W}$ being obtained by modifying the $\Lstr[i]$-th $0$ in $W$ and prepend $0$.
  In either case, we have
  $\lcp(\penc{\gtext_{\psa[j]-1}}, \penc{\gtext_{\psa[j']-1}}) = k+1$ and 
  $\penc{\gtext_{\psa[j]-1}}[k+1] = \penc{\gtext_{\psa[j]}}[k] < \penc{\gtext_{\psa[j']}}[k] = \penc{\gtext_{\psa[j']-1}}[k]$,
  which implies that $\penc{\gtext_{\psa[j]-1}} < \penc{\gtext_{\psa[j']-1}}$.
\end{proof}

Using properties of this kind, one might hope to improve to $O(n \psigma)$ time,
which is often seen in time complexities for p-matching algorithms 
like direct construction of pSAs~\cite{2019FujisatoNIBT_DirecLinearTimeConstOf_SPIRE}.

\section*{Acknowledgements}
Tomohiro I was supported by JSPS KAKENHI (Grant Numbers 22K11907 and 24K02899) and JST AIP Acceleration Research JPMJCR24U4, Japan.

\bibliographystyle{plainurl}
\bibliography{refs}

\begin{thebibliography}{10}

\bibitem{2024AmirKLMS_ReconParamStrinFromParam}
Amihood Amir, Eitan Kondratovsky, Gad~M. Landau, Shoshana Marcus, and Dina
  Sokol.
\newblock Reconstructing parameterized strings from parameterized suffix and
  {LCP} arrays.
\newblock {\em Theor. Comput. Sci.}, 981:114230, 2024.

\bibitem{2024AmirKMS_LinearTimeReconOfParam_SPIRE}
Amihood Amir, Eitan Kondratovsky, Shoshana Marcus, and Dina Sokol.
\newblock Linear time reconstruction of parameterized strings from
  parameterized suffix and {LCP} arrays for constant-sized alphabets.
\newblock In {\em Proc. 31st International Symposium on String Processing and
  Information Retrieval ({SPIRE}) 2024}, volume 14899 of {\em Lecture Notes in
  Computer Science}, pages 1--15. Springer, 2024.

\bibitem{1993Baker_TheorOfParamPatterMatch_STOC}
Brenda~S. Baker.
\newblock A theory of parameterized pattern matching: algorithms and
  applications.
\newblock In S.~Rao Kosaraju, David~S. Johnson, and Alok Aggarwal, editors,
  {\em Proc. 25th Annual {ACM} Symposium on Theory of Computing ({STOC})},
  pages 71--80. {ACM}, 1993.
\newblock \href {https://doi.org/10.1145/167088.167115}
  {\path{doi:10.1145/167088.167115}}.

\bibitem{1996Baker_ParamPatterMatchAlgorAnd}
Brenda~S. Baker.
\newblock Parameterized pattern matching: Algorithms and applications.
\newblock {\em Journal of Computer and System Sciences}, 52(1):28--42, 1996.
\newblock \href {https://doi.org/10.1006/jcss.1996.0003}
  {\path{doi:10.1006/jcss.1996.0003}}.

\bibitem{1997Baker_ParamDuplicInStrinAlgor}
Brenda~S. Baker.
\newblock Parameterized duplication in strings: Algorithms and an application
  to software maintenance.
\newblock {\em {SIAM} J. Comput.}, 26(5):1343--1362, 1997.
\newblock \href {https://doi.org/10.1137/S0097539793246707}
  {\path{doi:10.1137/S0097539793246707}}.

\bibitem{2003BannaiIST_InferStrinFromGraphAnd_MFCS}
Hideo Bannai, Shunsuke Inenaga, Ayumi Shinohara, and Masayuki Takeda.
\newblock Inferring strings from graphs and arrays.
\newblock In {\em Proc. 28th International Symposium on Mathematical
  Foundations of Computer Science ({MFCS}) 2003}, pages 208--217, 2003.

\bibitem{Burrows1994BWT}
Michael Burrows and David~J Wheeler.
\newblock A block-sorting lossless data compression algorithm.
\newblock Technical report, HP Labs, 1994.

\bibitem{2014CazauxR_ReverEnginOfCompacSuffix}
Bastien Cazaux and Eric Rivals.
\newblock Reverse engineering of compact suffix trees and links: {A} novel
  algorithm.
\newblock {\em J. Discrete Algorithms}, 28:9--22, 2014.

\bibitem{2008DeguchiHBIT_ParamSuffixArrayForBinar_PSC}
Satoshi Deguchi, Fumihito Higashijima, Hideo Bannai, Shunsuke Inenaga, and
  Masayuki Takeda.
\newblock Parameterized suffix arrays for binary strings.
\newblock In {\em Proc. Prague Stringology Conference ({PSC}) 2008}, pages
  84--94, 2008.
\newblock URL: \url{http://www.stringology.org/event/2008/p08.html}.

\bibitem{Ferragina2000ODS}
Paolo Ferragina and Giovanni Manzini.
\newblock Opportunistic data structures with applications.
\newblock In {\em Proc. 41st Annual Symposium on Foundations of Computer
  Science ({FOCS}) 2000}, pages 390--398, 2000.
\newblock \href {https://doi.org/10.1109/SFCS.2000.892127}
  {\path{doi:10.1109/SFCS.2000.892127}}.

\bibitem{2019FujisatoNIBT_DirecLinearTimeConstOf_SPIRE}
Noriki Fujisato, Yuto Nakashima, Shunsuke Inenaga, Hideo Bannai, and Masayuki
  Takeda.
\newblock Direct linear time construction of parameterized suffix and {LCP}
  arrays for constant alphabets.
\newblock In Nieves~R. Brisaboa and Simon~J. Puglisi, editors, {\em Proc. 26th
  International Symposium on String Processing and Information Retrieval
  ({SPIRE}) 2019}, volume 11811 of {\em Lecture Notes in Computer Science},
  pages 382--391. Springer, 2019.
\newblock \href {https://doi.org/10.1007/978-3-030-32686-9_27}
  {\path{doi:10.1007/978-3-030-32686-9_27}}.

\bibitem{2018GagieNP_OptimTimeTextIndexIn_SODA}
Travis Gagie, Gonzalo Navarro, and Nicola Prezza.
\newblock Optimal-time text indexing in bwt-runs bounded space.
\newblock In {\em In Proc. 29th Annual {ACM-SIAM} Symposium on Discrete
  Algorithms ({SODA}) 2018}, pages 1459--1477, 2018.

\bibitem{2017GangulyST_PbwtAchievSuccinDataStruc_SODA}
Arnab Ganguly, Rahul Shah, and Sharma~V. Thankachan.
\newblock {pBWT}: Achieving succinct data structures for parameterized pattern
  matching and related problems.
\newblock In {\em Proc. 28th Annual {ACM-SIAM} Symposium on Discrete Algorithms
  ({SODA}) 2017}, pages 397--407, 2017.
\newblock \href {https://doi.org/10.1137/1.9781611974782.25}
  {\path{doi:10.1137/1.9781611974782.25}}.

\bibitem{2022GangulyST_FullyFunctParamSuffixTrees_ICALP}
Arnab Ganguly, Rahul Shah, and Sharma~V. Thankachan.
\newblock Fully functional parameterized suffix trees in compact space.
\newblock In Mikolaj Bojanczyk, Emanuela Merelli, and David~P. Woodruff,
  editors, {\em Proc. 49th International Colloquium on Automata, Languages, and
  Programming, ({ICALP}) 2022}, volume 229 of {\em LIPIcs}, pages 65:1--65:18.
  Schloss Dagstuhl - Leibniz-Zentrum f{\"{u}}r Informatik, 2022.
\newblock \href {https://doi.org/10.4230/LIPIcs.ICALP.2022.65}
  {\path{doi:10.4230/LIPIcs.ICALP.2022.65}}.

\bibitem{2011IIBT_VerifAndEnumerParamBorder}
Tomohiro I, Shunsuke Inenaga, Hideo Bannai, and Masayuki Takeda.
\newblock Verifying and enumerating parameterized border arrays.
\newblock {\em Theoretical Computer Science}, 412(50):6959 -- 6981, 2011.
\newblock URL:
  \url{http://www.sciencedirect.com/science/article/pii/S0304397511007742},
  \href {https://doi.org/10.1016/j.tcs.2011.09.008}
  {\path{doi:10.1016/j.tcs.2011.09.008}}.

\bibitem{2014IIBT_InferStrinFromSuffixTrees}
Tomohiro I, Shunsuke Inenaga, Hideo Bannai, and Masayuki Takeda.
\newblock Inferring strings from suffix trees and links on a binary alphabet.
\newblock {\em Discret. Appl. Math.}, 163:316--325, 2014.

\bibitem{2024IseriIHKYS_BreakBarrierInConstCompac_ICALP}
Kento Iseri, Tomohiro I, Diptarama Hendrian, Dominik K{\"{o}}ppl, Ryo
  Yoshinaka, and Ayumi Shinohara.
\newblock Breaking a barrier in constructing compact indexes for parameterized
  pattern matching.
\newblock In {\em Proc. 51st International Colloquium on Automata, Languages,
  and Programming ({ICALP}) 2024}, volume 297 of {\em LIPIcs}, pages
  89:1--89:19. Schloss Dagstuhl - Leibniz-Zentrum f{\"{u}}r Informatik, 2024.

\bibitem{2023KaerkkaeinenPP_StrinInferFromLongesCommon}
Juha K{\"{a}}rkk{\"{a}}inen, Marcin Piatkowski, and Simon~J. Puglisi.
\newblock String inference from longest-common-prefix array.
\newblock {\em Theor. Comput. Sci.}, 942:180--199, 2023.

\bibitem{2021KimC_SimplFmIndexForParam}
Sung{-}Hwan Kim and Hwan{-}Gue Cho.
\newblock Simpler {FM}-index for parameterized string matching.
\newblock {\em Inf. Process. Lett.}, 165:106026, 2021.
\newblock \href {https://doi.org/10.1016/j.ipl.2020.106026}
  {\path{doi:10.1016/j.ipl.2020.106026}}.

\bibitem{2013KucherovTV_CombinOfSuffixArray}
Gregory Kucherov, Lilla T{\'{o}}thm{\'{e}}r{\'{e}}sz, and St{\'{e}}phane
  Vialette.
\newblock On the combinatorics of suffix arrays.
\newblock {\em Inf. Process. Lett.}, 113(22-24):915--920, 2013.

\bibitem{2020MendivelsoTP_BriefHistorOfParamMatch}
Juan Mendivelso, Sharma~V. Thankachan, and Yoan~J. Pinz{\'{o}}n.
\newblock A brief history of parameterized matching problems.
\newblock {\em Discret. Appl. Math.}, 274:103--115, 2020.
\newblock \href {https://doi.org/10.1016/j.dam.2018.07.017}
  {\path{doi:10.1016/j.dam.2018.07.017}}.

\bibitem{2015MunroN_ComprDataStrucForDynam_ESA}
J.~Ian Munro and Yakov Nekrich.
\newblock Compressed data structures for dynamic sequences.
\newblock In {\em Proc. 23rd Annual European Symposium on Algorithms ({ESA})
  2015}, pages 891--902, 2015.
\newblock \href {https://doi.org/10.1007/978-3-662-48350-3_74}
  {\path{doi:10.1007/978-3-662-48350-3_74}}.

\bibitem{Starikovskaya2015sto}
Tatiana~A. Starikovskaya and Hjalte~Wedel Vildh{\o}j.
\newblock A suffix tree or not a suffix tree?
\newblock {\em J. Discrete Algorithms}, 32:14--23, 2015.

\end{thebibliography}

\end{document}